\newtheorem{prop}{Proposition}
\newtheorem{defn}{Definition}
\begin{document}

\title{\Large Edge Replacement Grammars : A Formal Language Approach for Generating Graphs}
%\author{Blind Submission}
\author{Revanth Reddy$^1$\thanks{Both authors contributed equally.}  \qquad \qquad Sarath Chandar$^2$\footnotemark[1] \qquad \qquad Balaraman Ravindran$^{1,3}$ \\
$^1$Department of Computer Science and Engineering, Indian Institute of Technology Madras \\
$^2$Mila, Universit\'e de Montr\'eal \\
$^3$Robert Bosch Centre for Data Science and AI, Indian Institute of Technology Madras\\
\texttt{g.revanthreddy111@gmail.com, apsarathchandar@gmail.com, ravi@cse.iitm.ac.in}}
%\and
%Tricia Manning\thanks{Society for Industrial and Applied Mathematics.}}

\date{}
\maketitle

% Copyright Statement
% When submitting your final paper to a SIAM proceedings, it is requested that you include 
% the appropriate copyright in the footer of the paper.  The copyright added should be 
% consistent with the copyright selected on the copyright form submitted with the paper.
% Please note that "20XX" should be changed to the year of the meeting.

% Default Copyright Statement
\fancyfoot[R]{\scriptsize{Copyright \textcopyright\ 2019 by SIAM\\
Unauthorized reproduction of this article is prohibited}}
% Depending on which copyright you agree to when you sign the copyright form, the copyright 
% can be changed to one of the following after commenting out the default copyright statement
% above.

%\fancyfoot[R]{\scriptsize{Copyright \textcopyright\ 20XX\\
%Copyright for this paper is retained by authors}}

%\fancyfoot[R]{\scriptsize{Copyright \textcopyright\ 20XX\\
%Copyright retained by principal author's organization}}

%\pagenumbering{arabic}
%\setcounter{page}{1}%Leave this line commented out.
\setlength{\belowcaptionskip}{-10pt}
\renewcommand{\figurename}{Fig.}

\begin{abstract} \small\baselineskip=9pt 

Graphs are increasingly becoming ubiquitous as models for structured data. A generative model that closely mimics the structural properties of a given set of graphs has utility in a variety of domains. Much of the existing work require that a large number of parameters, in fact exponential in size of the graphs, be estimated from the data. We take a slightly different approach to this problem, leveraging the extensive prior work in the formal graph grammar literature. In this paper, we propose a graph generation model based on Probabilistic Edge Replacement Grammars (PERGs). We propose a variant of PERG called Restricted PERG (RPERG), which is analogous to PCFGs in string grammar literature. With this restriction, we are able to derive a learning algorithm for estimating the parameters of the grammar from graph data. We empirically demonstrate on real life datasets that RPERGs outperform existing methods for graph generation. We improve on the performance of the state-of-the-art Hyperedge Replacement Grammar based graph generative model. Despite being a context free grammar, the proposed model is able to capture many of the structural properties of real networks, such as degree distributions, power law and spectral characteristics.\\
\textbf{Keywords: }Graph Generative Models, Graph Mining, Graph Grammars\end{abstract}

\section{Introduction}

Graphs are used to represent various structured data. A variety of networks ranging from social networks to biological networks can be represented as graphs with nodes representing entities and edges representing the relationship between them. Because of the widespread use of graphs as a representation language, many of the usual machine learning tasks are now being specialized for graphs.

One such machine learning task is to estimate the parameters of the generative model of a graph. A good generative model should be able to capture the structural properties of the graph, like degree distribution, community structure, smaller diameter, eigen distributions and so on.
%The structural properties can be static like degree distribution, community structure, smaller diameter, or it can be dynamic (temporal) like densification of the graph structure, shrinkage of the diameter over time and so on. 
The advantages of having a good generative model for a class of graphs are several-fold:
\begin{itemize}[noitemsep]
    \item We can use the generative model to generate realistic graphs and run simulation studies on it, instead of running experiments on the real network, which might not be feasible always.
    
%    \item We can use the generative model to understand how a graph of a particular class grows and how it evolves over time.
    
    \item If we are able to fit the model more accurately, we can use the model to compress the graph data, by just saving the model instead of the entire graph data.
    
    \item We can do graph classification if we can learn a generative model for a class of graphs, by determining the notion of likelihood of the test graph as per the given model.
    
    \item The model can also be used to anonymize the graph data, by generating graphs similar to the original graphs and keeping the original graphs confidential. This will be more helpful for medical data.
    
\end{itemize}

These advantages make the problem of designing generative models for graphs an important research problem in network sciences and various graph generation models have been proposed in the past. The earliest generative model in a probabilistic setting was the E-R random graph model \cite{1} . However, the model fails to match several network properties. Specifically, this model does not simulate heavy tailed degree distributions. To overcome this, several other models were proposed. Most of these models belong to the family of preferential attachment models \cite{2,3} which employ the ``rich get richer" phenomenon, which leads to power law distributions. There are several variations of ``rich get richer" models like the ``copying model" \cite{6}, the ``winner does not take all" model \cite{7}, the ``forest fire" model \cite{leskovec2007graph} and so on. There is also a different class of models that simulate the ``small world network" \cite{9}. For a detailed survey of the existing statistical network models, refer \cite{10}. 

Most of these models match one or few of the properties of the natural graph. There has been significant interest to come up with a single model that can simulate most of the graph properties. Kronecker graph generators \cite{11} is an example. However they have few limitations. For example, the number of nodes is predetermined. A recursive realistic graph generator using random typing is proposed in \cite{12}. Even though there are many such models, designing a model which has a fast and scalable learning procedure, while also capturing all the structural properties of the network is still a challenging problem.

In this work, we propose a graph generation model based on graph grammars. Unlike other graph generation models, we view graph generation process as a formal language derivation process. We assume that there is an underlying grammar which is generating this graph and the graph evolves according to the grammar rules. So, the problem of graph generation is now reduced to that of inducing the grammar which generated this data, allowing us to leverage extensive prior work in the formal graph grammar literature. Here, the graph generation process is viewed as a derivation from a single edge using a probabilistic edge replacement grammar. The likelihood of a graph belonging to a particular family is the probability of the derivation under the appropriate grammar. The idea of using graph grammars for graph generation was also explored in \cite{aguinaga2016growing} where authors propose a hyperedge replacement grammar (HRG) based generative model. We compare our approach with their approach.

Edge Replacement Grammars are graph grammar formalisms where the rules replace an edge in a graph with another graph. We propose a variant of Probabilistic Edge Replacement Grammar called Restricted Probabilistic Edge Replacement Grammar (RPERG). RPERG is analogous to PCFG in string grammar literature. This will become evident once we define RPERGs formally. We tested the capabilities of the model by fitting it onto several real world datasets. Experimental results demonstrate that the model is able to capture most of the statistical and structural properties of the graph better than existing graph generators. The major advantages of this model over the existing models are as follows:
\begin{itemize}[noitemsep]
    \item The model makes no assumptions on the underlying graph family.
    
    \item The model assumes no specific parametric form. The parameters of this model are the grammar rules and the number of rules is determined by the
complexity of the data.

    \item The model parameters are more easily interpretable. They are nothing but
the statistically significant subgraph patterns that repeat itself in the graph. They can also be considered as the motifs in the graph.
\end{itemize}
Contributions of this paper are several-fold:
\begin{itemize}[noitemsep]
    \item  We define a family of graphs called ``non-squeezable graphs" and provide a complete characterization of the family.
    \item  We propose a PCFG equivalent grammar in graph grammar literature based on non-squeezable graphs, which we call as Restricted Probabilistic Edge Replacement Grammar (RPERG).
    \item We provide a maximum-likelihood learning methodology to learn the grammar from the given data.
    \item The proposed model captures the structural properties of the graph better than existing state-of-the-art graph generators.
\end{itemize}
The rest of the paper is organized as follows. In Section 2, we define the basic terminology related to Edge Replacement Grammars and briefly describe the existing Hyperedge Replacement Grammar (HRG)\cite{aguinaga2016growing} based approach. Section 3 introduces the family of non-squeezable graphs and gives an algorithm for learning RPERGs from a set of graphs. In Section 4, we provide results for performance of the proposed models on various datasets. Section 5 concludes the paper and gives directions for future work.

\section{Background}
\subsection{Edge Replacement Grammars}

We define Edge Replacemet Grammars (ERGs) along the lines of Hyperedge Replacement Grammars (HRGs) by \cite{13}. For the sake of simplicity, we state our definitions in terms of edge labeled undirected graphs. The concepts can be easily extended to accommodate node labels as well as directed edges.

\begin{defn}
An edge replacement grammar (ERG) is a tuple $\mathcal{G}$= $\langle N,T,P,S \rangle$ where
   \begin{itemize}[noitemsep]
       \item N and T are finite disjoint sets of non-terminal and terminal edge labels.
       \item S $\in$ N is the start edge label.
       \item P is a finite set of productions of the form A $\rightarrow$ R, where $A\in$ N and R is a graph with edge labels drawn from N $\cup$ T.
   \end{itemize}
\end{defn}

%A graph X' is said to be derived from a graph X in ERG G, if X' can be obtained by applying a series of production rules starting from X. This is denoted by $ X \Longrightarrow_G^* X'$. 

We say that a graph $X'$ is derived from a graph $X$ in ERG $\mathcal{G}$, if we can obtain $X'$ by applying a series of production rules starting from $X$. We denote this by \(X \Longrightarrow_\mathcal{G}^* X' \). 
%Let $e = (u_1,u_2)$ be an edge in X with label $A$. Let $(A\rightarrow R)$ be a rule in the grammar with external nodes $(v_1,v_2)$. Then $X'$ is formed by removing $e$ from $X$, making an isomorphic copy of $R$ and identifying $v_i$ with the copy of $u_i$ for i=1,2. 
Figure \ref{fig:fig1a} gives an example ERG and Figure \ref{fig:fig1b} gives a sample derivation using the grammar. Another important thing to note is that the paper makes the assumption that T = $\{\epsilon\}$ i.e all the edge labels are non-terminal edge labels. 

\begin{figure}[h!]
	\centering
	\captionsetup{skip=12pt}
	\begin{subfigure}[b]{.22\textwidth}
		\centering
	\captionsetup{skip=-3pt}	\includegraphics[scale=0.27]{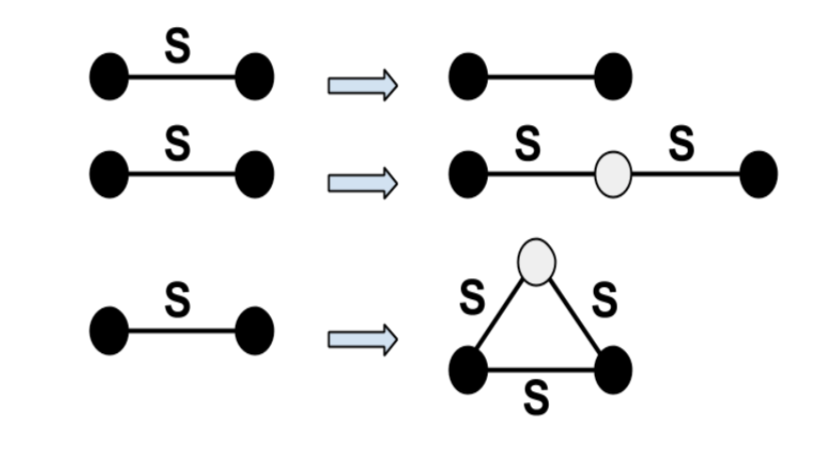}
	    \caption{}\label{fig:fig1a}
	\end{subfigure}
	\begin{subfigure}[b]{.22\textwidth}
		\centering
	\captionsetup{skip=-3pt}	\includegraphics[scale=0.27]{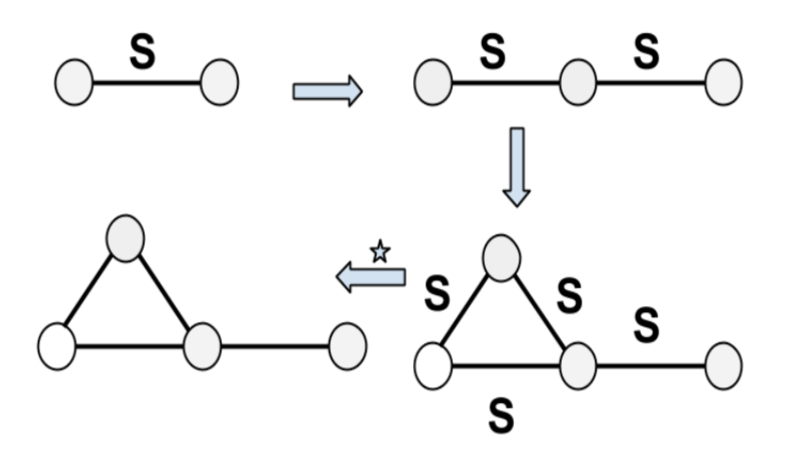}
	    \caption{}\label{fig:fig1b}
	\end{subfigure}%
	\caption{\small (a) Sample ERG (b) Sample derivation using Grammar in (a) }
	
\end{figure}
\iffalse
\begin{figure*}[h!]
    \centering
    \captionsetup{skip=14pt}
	\begin{subfigure}[b]{.35\textwidth}
		\centering
	\captionsetup{skip=-3pt}	\includegraphics[scale=0.37]{}
		\includegraphics[scale=0.27]{}
	    \caption{}
	\end{subfigure}
	\begin{subfigure}[b]{.31\textwidth}
		\centering
	\captionsetup{skip=-3pt}	\includegraphics[scale=0.35]{}
		\includegraphics[scale=0.27]{}
	    \caption{}
    \end{subfigure}
	\begin{subfigure}[b]{.32\textwidth}
		\centering
	\captionsetup{skip=-3pt}	\includegraphics[scale=0.35]{}
		\includegraphics[scale=0.27]{}
	    \caption{}
    \end{subfigure}
    \caption{\small HRG rule extraction process depending on the type of node (black) of clique tree ((a) Interior, (b) Root, (c) Leaf).}
        \label{fig:figint}
   \end{figure*} 
\fi
\begin{defn}
A Probabilistic Edge Replacement Grammar (PERG) consists of 
    \begin{itemize}[noitemsep]
        \item An edge replacement grammar $\mathcal{G}$ = $\langle N,T,P,S \rangle$ 
        \item A parameter p(A$\rightarrow$R) for each rule $A\rightarrow R \in P$, which is the conditional probability of choosing this rule given that the non-terminal being expanded is A. For any $X \in N$, $\sum_{A\rightarrow R:A=X} p(A \rightarrow R) = 1$
    \end{itemize}
\end{defn}

Let $G_\mathcal{G}$ be the set of all graphs that can be generated from the grammar $\mathcal{G}$. For any graph $g \in G_\mathcal{G}$ generated by applying the rules $A_1 \rightarrow R_1$, $A_2 \rightarrow R_2$,...,$A_n \rightarrow R_n$, the probability of $g$ under PERG is given by 
\setlength{\abovedisplayskip}{3pt}
\setlength{\belowdisplayskip}{3pt}
\[p(g) = \prod_{i=1}^n p(A_i \rightarrow R_i)\]

If we assign probabilities 0.2, 0.4, 0.4 to the three rules in Figure-1a respectively, then the probability of the graph generated in Figure-1b is given by $0.4*0.4*(0.2)^4$. The sum of probabilities of all $g \in G_\mathcal{G}$ will be 1. Here, probability of $g$ under $\mathcal{G}$ is the probability of generating the graph $g$  by sampling rules from the grammar $\mathcal{G}$.

\subsection{HRG based approach}
\label{sec:hrg}
HRG based graph generative model \cite{aguinaga2016growing} has been shown to outperform existing Chung-Lu \cite{chung2002connected} and Kronecker \cite{leskovec2010kronecker} models. In this section, we give a brief overview of the HRG based approach. First, we introduce clique trees and then define hyperedge replacement grammars. The content in this section is based on \cite{aguinaga2016growing}.
	
    All graphs can be decomposed into a clique tree. A network's clique tree encodes robust and precise information about the network. Here, we just give a brief definition of clique trees. For more information, we refer the reader to Chapters 9,10 of \cite{Koller}. 
    %An example clique tree is shown in Figure \ref{fig:figcl}.
	
    \begin{defn}
   A clique tree of a graph H = (V,E) is a tree T, each of whose nodes $\eta$ is labelled with a $V_{\eta} \subseteq V$ and $E_{\eta} \subseteq E$, such that the following properties hold:
   \begin{itemize}[noitemsep]
       \item Vertex Cover: For each  $v \in V$, there is a vertex $\eta \in T$ such that $v \in V_{\eta}$.
       \item Edge Cover: For each hyperedge $e_i = \{v_1,...,v_k\} \in E$, there is exactly one node $\eta \in T$ such that $e \in E_{\eta}$. Moreover, $v_1,...,v_k \in V_{\eta} $
   \end{itemize}
   \end{defn}
   
   %\begin{defn}
   %The width of a clique tree is max(|$V_{\eta}$-1|), and the treewidth of a graph H is the minimal width of any clique tree of H.
   %\end{defn}
   
%    \begin{figure}[h!]
%    \centering
%	\begin{subfigure}[b]{.19\textwidth}
%		\centering
%		\includegraphics[scale=0.30]{images/cliq_graph.png}
%	    \caption{}
%	\end{subfigure}
%	\begin{subfigure}[b]{.20\textwidth}
%		\centering
%		\includegraphics[scale=0.30]{images/cliq_tree.png}
%	    \caption{}
%    \end{subfigure}
%    \caption{A graph (a) and one possible minimal-width clique tree (b) for it.}
%        \label{fig:figcl}
%   \end{figure}
    
    %Now, we describe Hyperedge replacement grammars.
   A hyperedge is an edge which can connect any number of vertices. If a hyperedge edge $e$ connects vertices $v_1,v_2,...,v_i$, then it is denoted as: \(e = \{v_1,v_2,...,v_i\} \) . Here, \(|e| = i\). A hypergraph is a graph $H = (V,E)$ in which each edge is a hyperedge. 
    
\begin{defn}
   A hyperedge replacement grammar is a tuple $\mathcal{G}$ = $\langle N,T,S,P \rangle$, where
   \begin{itemize}[noitemsep]
       \item N is a finite set of non-terminal symbols. Each nonterminal A has a non negative integer rank, which we write $|e|$.
       \item T is a finite set of terminal symbols. 
       \item $S \in N$ is a distinguished starting nonterminal, and $|S| = 0$
       \item P is a finite set of production rules $A \rightarrow R$, where
        \begin{itemize}[noitemsep]
            \item A is a nonterminal symbol.
            \item R is a hypergraph whose edges are labelled by symbols from $T \cup N$. If an edge e is labelled by a non-terminal B, we must have $|e| = |B|$.
            \item Exactly $|A|$ vertices of R are designated \textit{external} vertices. The other vertices in R are called \textit{internal} vertices.
        \end{itemize}
   \end{itemize}
   
   \end{defn}
      
   %An HRG, which is extracted from the clique tree, contains graphical rewriting rules that can match and replace graph fragments similar to how a Context Free Grammar (CFG) rewrites characters in a string.
   The first step in learning an HRG from a graph is to compute a clique tree from the original graph. Finding the minimal-width clique tree is NP-complete \cite{arnborg1987complexity}. \cite{aguinaga2016growing} uses a Maximum Cardinality Search (MCS) heuristic introduced by \cite{tarjan1984simple} to compute a clique tree with a reasonably-low, but not necessarily minimal, width. Then, this clique tree induces an HRG in a natural way as shown below. The approach differs based on the type of node of the clique tree that is being processed. We refer the reader to \cite{aguinaga2016growing} for a more detailed discussion and visualization of the HRG learning process. %Figure \ref{fig:figint} visualizes the HRG rule extraction process.
   
   \begin{itemize}
   \item \textbf{Interior Node:} Let $\eta$ be an interior node of the clique tree $T$, let $\eta'$ be its parent, and let $\eta_1, \eta_2,...,\eta_m$ be its children. Node $\eta$ corresponds to an HRG production rule $A \rightarrow R$ as follows. First, $|A|$ = $|V_{\eta'} \cap V_{\eta}|$. Then, $R$ is formed by:
   \begin{itemize}[noitemsep]
       \item Adding an isomorphic copy of the vertices in $V_{\eta}$ and the edges in $E_{\eta}$.
       \item Marking the (copies of) vertices in $V_{\eta'} \cap V_{\eta}$ as external vertices.
       \item Adding, for each $\eta_i$, a nonterminal hyperedge connecting the (copies of) vertices in $V_{\eta} \cap V_{\eta_i}$
   \end{itemize}

   \item \textbf{Root Node:} The RHS is computed similar to the interior node case except that it has no external vertices. The start non-terminal $S$ is the LHS and it has rank 0.
   
   \item \textbf{Leaf Node:} The LHS and RHS are calculated in the same way as the interior node case except that no new non-terminal hyperedges are added to the RHS, as there are no children.
   
   \end{itemize}
 
\section{Approach}

\subsection{Non-squeezable graphs}

Learning PERG from the graph data is hard, since the RHS of rules can be any subgraph. So we define a restricted version of PERGs, which we call Restricted PERGs (RPERGs). Before defining RPERG, we introduce a new operation in connected graphs, called \textit{squeezing}.
\begin{defn}
Let $u$, $v$ be a pair of vertices in the graph $G$. Let $g_1, g_2, ..., g_t$ be the connected components obtained by removing $u,v$ from $G$. A squeezing operation with respect to $u, v$ is an operation where one of the components $g_i$ is replaced by an edge between $u, v$.
\end{defn}

Here, $t$ is the number of connected components obtained after removing $u$, $v$ from G. When $t = 1$, the entire graph will be squeezed into a single edge. Squeezing can be viewed as the reverse operation of edge expansion. %However, it need not be necessary for the existence of that particular edge in the graph for squeezing. 
The following is a special case for the squeezing operation. If $t \geq 3$ and $g_1,...,g_t$ are isolated vertices, then the squeeze operation with respect to $u, v$ replaces the entire graph with the edge $u, v$. Figure \ref{fig:fig2} gives some examples for squeezing.

A squeezing operation in which the entire graph is squeezed into a single edge is called a \textit{trivial squeeze}. Now we will define a class of graphs called non-squeezable graphs.

\begin{figure}[h!]
	\centering
	\captionsetup{skip=14pt}
	\begin{subfigure}[b]{.15\textwidth}
		\centering
	\captionsetup{skip=0pt}	\includegraphics[scale=0.25]{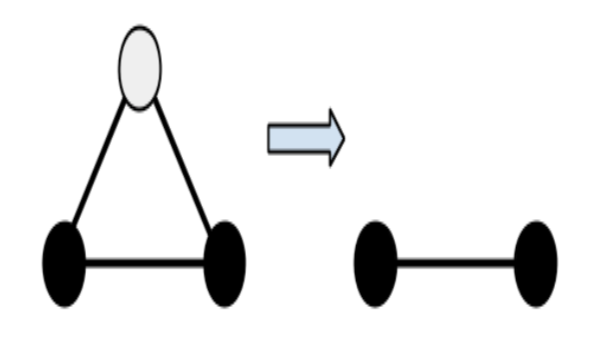}
	    \caption{}\label{fig:fig2a}
	\end{subfigure}
	\begin{subfigure}[b]{.15\textwidth}
		\centering
	\captionsetup{skip=0pt}	\includegraphics[scale=0.25]{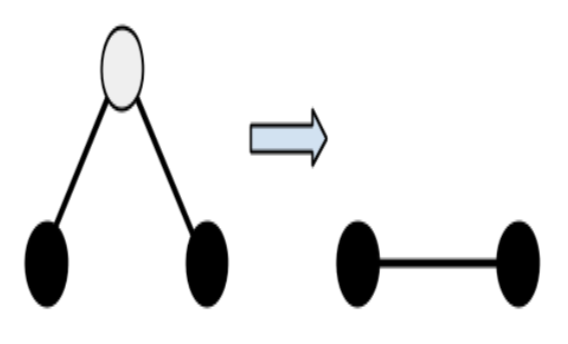}
	    \caption{}\label{fig:fig2b}
	\end{subfigure}%
	\begin{subfigure}[b]{.15\textwidth}
		\centering
	\captionsetup{skip=0pt}	\includegraphics[scale=0.25]{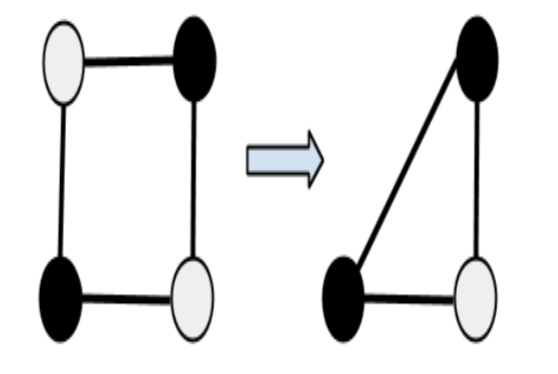}
	    \caption{}\label{fig:fig2c}
	\end{subfigure}%
	\caption{\small (a),(b) are trivial squeeze while (c) is a non-trivial squeeze. In (b), a new edge has been introduced due to squeezing. The darkened nodes correspond to $u, v$.}
    \label{fig:fig2}
\end{figure}

\begin{defn}
A non-squeezable graph is a graph in which the only squeeze operation that is possible is the trivial squeeze.
\end{defn}

A graph is squeezable if there are non-trivial squeezes possible. Figure \ref{fig:fig3} gives examples for some non-squeezable graphs and squeezable graphs. Triangle and star graphs are considered to be degenerate cases for non-squeezable graphs. We will now try to characterize the class of graphs that are non-squeezable.
\iffalse
\begin{figure}[h!]
	\centering
	\captionsetup{skip=-6pt}
	\begin{subfigure}[b]{.22\textwidth}
		\centering
		\includegraphics[scale=0.17]{}
	   \label{fig:fig3c}
	\end{subfigure}
	\begin{subfigure}[b]{.22\textwidth}
		\centering
		\includegraphics[scale=0.17]{}
	    \label{fig:fig3b}
	\end{subfigure}
    \caption{\small a,b,c are non-squeezable while d,e,f are squeezable.}
    \label{fig:fig3}
\end{figure}	
\fi

\begin{figure}[h!]
	\centering
	\captionsetup{skip=14pt}
	\begin{subfigure}[b]{.075\textwidth}
		\centering
	\captionsetup{skip=0pt}	\includegraphics[scale=0.17]{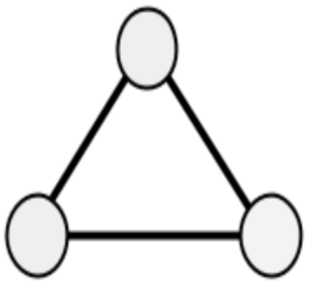}
	    \caption{}\label{fig:fig2a}
	\end{subfigure}
	\begin{subfigure}[b]{.075\textwidth}
		\centering
	\captionsetup{skip=0pt}	\includegraphics[scale=0.17]{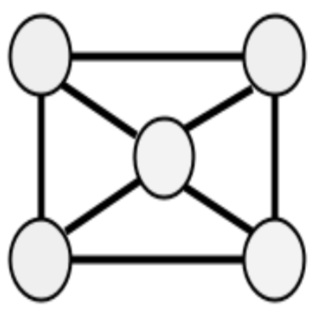}
	    \caption{}\label{fig:fig2b}
	\end{subfigure}%
	\begin{subfigure}[b]{.075\textwidth}
		\centering
	\captionsetup{skip=0pt}	\includegraphics[scale=0.17]{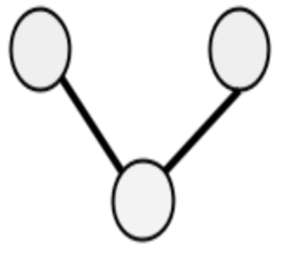}
	    \caption{}\label{fig:fig2c}
	\end{subfigure}%
	\begin{subfigure}[b]{.075\textwidth}
		\centering
	\captionsetup{skip=0pt}	\includegraphics[scale=0.17]{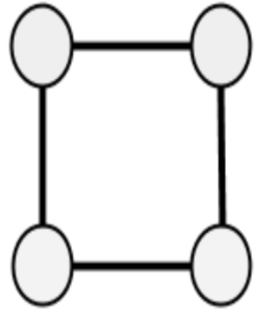}
	    \caption{}\label{fig:fig2a}
	\end{subfigure}
	\begin{subfigure}[b]{.075\textwidth}
		\centering
	\captionsetup{skip=0pt}	\includegraphics[scale=0.17]{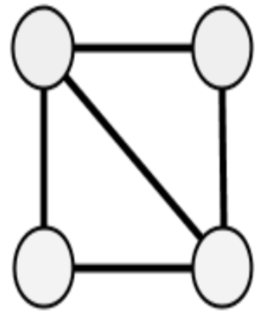}
	    \caption{}\label{fig:fig2b}
	\end{subfigure}%
	\begin{subfigure}[b]{.075\textwidth}
		\centering
	\captionsetup{skip=0pt}	\includegraphics[scale=0.17]{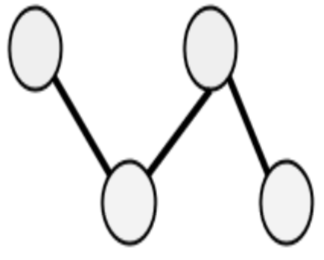}
	    \caption{}\label{fig:fig2c}
	\end{subfigure}%
    \caption{\small a,b,c are non-squeezable while d,e,f are squeezable.}
    \label{fig:fig3}
\end{figure}

\begin{prop}
All $k$-vertex connected graphs for $k \geq 3$ are non-squeezable.
\end{prop}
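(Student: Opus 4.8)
The plan is to read the hypothesis in the sense of vertex connectivity: a graph that is $k$-vertex-connected for some $k \geq 3$ has vertex connectivity $\kappa(G) \geq 3$, and since $k$-connectivity for $k \geq 3$ implies $3$-connectivity, it suffices to establish the statement for $3$-connected graphs. The core of the argument is to relate the combinatorial notion of a squeeze to the number of connected components left after deleting a pair of vertices, and then to invoke connectivity to pin down that number.

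First I would record the following reduction of the squeeze definition. Fix a pair $u, v$ and let $t$ be the number of connected components $g_1, \ldots, g_t$ of $G$ with $u, v$ removed. A squeeze with respect to $u, v$ retains $u, v$, retains every component except one $g_i$, and replaces $g_i$ by the single edge between $u$ and $v$. I would observe that the outcome is the trivial squeeze, in which the whole graph collapses to one edge, exactly when no vertex other than $u, v$ survives, i.e. precisely when $t = 1$ so that the unique component is the one being replaced (together with the degenerate star case that the text singles out separately). Conversely, when $t \geq 2$ and we are not in that degenerate case, at least one component $g_j$ with $j \neq i$ remains attached to $u$ or $v$, so the resulting graph properly contains more than the edge $uv$ and the squeeze is non-trivial. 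Hence $G$ admits a non-trivial squeeze if and only if some pair $u, v$ satisfies $t \geq 2$.

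Next I would bring in connectivity. Since $\kappa(G) \geq 3$, the graph has at least four vertices and possesses no vertex cut of size two; equivalently, for every pair $u, v$ the graph obtained by deleting $u, v$ is still connected and non-empty, so that $t = 1$ for every choice of $u, v$. By the characterization of the previous paragraph, no pair admits a non-trivial squeeze, so the only squeeze available is the trivial one, which is exactly the definition of a non-squeezable graph. This completes the argument.

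The main obstacle I anticipate is not depth but boundary care. I must state the reduction ``$t = 1 \Rightarrow$ trivial squeeze'' so that it correctly sidesteps the degenerate triangle and star cases treated separately in the text, and I must use the vertex-count bound $|V(G)| \geq 4$ to rule out the spurious possibility $t = 0$ (which would force $|V(G)| = 2$). The one place where the argument could genuinely go wrong is the interface between graph theory and the paper's definition: I should confirm that $\kappa(G) \geq 3$ is indeed equivalent to the absence of a two-vertex cut, and that this is the intended reading of ``$k$-vertex connected'' rather than merely ``connected on $k$ vertices,'' since the latter is false (for instance a path on four vertices is connected yet clearly squeezable).
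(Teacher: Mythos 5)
Your proposal is correct and follows essentially the same route as the paper: both arguments observe that for a $3$-connected graph no pair of vertices disconnects it, so removing any $u,v$ leaves a single component and the only available squeeze is the trivial one. Your version merely makes explicit the boundary checks (the $t=0$ case and the degenerate star/triangle cases) that the paper leaves implicit.
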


\begin{proof}
The proof is based on the definition of \textit{squeezing} operation. For any $k$-connected graph with $k\geq3$, we need atleast 3 vertices to disconnect the graph into two components. \textit{Squeeze} operation essentially finds a partition of the graph into two parts and squeezes one of them into an edge. This is not possible when $k\geq3$, since you cannot find a pair of vertices that partitions the graph into two parts. Note that the reverse of this proposition is not true. Figure 3-c is a counter-example which is 1-connected and non-squeezable.
\end{proof}

\begin{prop}Triangle and Star graphs are the only set of graphs which are k-connected with $k < 3$ and also non-squeezable.
\end{prop}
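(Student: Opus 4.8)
The plan is to prove the easy inclusion quickly and spend the effort on completeness: I must show that any connected graph with vertex connectivity $k<3$ that is non-squeezable is a triangle or a star. That triangles and stars are themselves non-squeezable is the degenerate case already flagged in the text, so I would dispatch it by inspecting separating pairs: deleting any two vertices of a triangle leaves a single vertex ($t=1$), hence only the trivial squeeze; in a star, deleting two leaves leaves a connected remainder ($t=1$), while deleting the centre and a leaf isolates the remaining leaves, each adjacent only to the (deleted) centre, so no surviving component is adjacent to both deleted vertices and again no non-trivial squeeze exists. The heart of the argument is a case split on $k\in\{1,2\}$.

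For $k=2$ I would argue by the number of vertices. A $2$-connected graph on exactly three vertices must be the triangle, so assume $|V(G)|\ge 4$. Since $G$ is $2$-connected but not $3$-connected it has a separating pair $\{u,v\}$, and a standard property of $2$-connected graphs guarantees that every component of $G-\{u,v\}$ is adjacent to both $u$ and $v$ (otherwise one of $u,v$ alone would disconnect $G$). Choosing one such component and squeezing it into the edge $uv$ leaves the remaining component(s) in place, so the result is not a single edge; this is a non-trivial squeeze, contradicting non-squeezability. Hence the triangle is the only non-squeezable graph with $k=2$.

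For $k=1$ I would proceed in two steps through the block structure. First, to show $G$ is a tree: if some block $B$ has at least three vertices (and is therefore $2$-connected), I pick a cut vertex $x$ of $G$ lying in $B$, a component $H$ of $G-x$ disjoint from $B$, and a neighbour $y$ of $x$ inside $B$; using $2$-connectivity of $B$ one checks that every component of $B-\{x,y\}$ is adjacent to both $x$ and $y$, so squeezing such a component into the edge $xy$ while retaining $H$ is a non-trivial squeeze. Thus every block is a single edge and $G$ is a tree. Second, to show the tree is a star, I bound its diameter: if the diameter were at least $3$ there is a path $a-b-c-d$, and deleting $\{a,c\}$ isolates $b$ (adjacent to both $a$ and $c$) while leaving the subtree containing $d$ as a separate component, so squeezing $b$ into the edge $ac$ is non-trivial, a contradiction. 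A tree of diameter at most $2$ is exactly a star, finishing this case.

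The step I expect to be the main obstacle is controlling the special case in the definition of squeezing, in which a separating pair leaving $t\ge 3$ isolated vertices is declared a trivial squeeze. I must ensure that every squeeze I construct is genuinely non-trivial and is not neutralised by this rule — in particular for configurations in which all surviving components are single vertices, such as the complete bipartite graphs $K_{2,m}$, which must come out squeezable rather than non-squeezable. The clean way around this is to always retain a non-isolated component after the squeeze, and, when every component is a single vertex, to exhibit a different separating pair producing a bona fide non-trivial squeeze; verifying that such a pair always exists for every graph other than triangles and stars is the delicate part of the argument.
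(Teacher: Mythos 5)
Your route is genuinely different from the paper's. The paper reduces Proposition 2 to its Lemma 3.1 --- in a non-squeezable graph, for every separating pair $(u,v)$ one of the two vertices separates every other vertex from the other --- proves that lemma by a short contradiction (a component of $G-\{u,v\}$ that reaches both $u$ and $v$ can be squeezed), and then simply asserts that the proposition follows. You instead split on the connectivity $k\in\{1,2\}$, use the standard fact that every component of $G-\{u,v\}$ in a $2$-connected graph attaches to both $u$ and $v$, and handle $k=1$ via the block decomposition plus a diameter bound on trees. Your version is more explicit and actually supplies the deduction the paper leaves implicit; at bottom both arguments rest on the same construction (find a component adjacent to both vertices of a separating pair and squeeze it while something else survives), just packaged differently.

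The gap is exactly where you predicted it, and your proposed repair does not work. For $K_{2,m}$ with $m\ge 3$ the \emph{only} separating pair is the pair of degree-$m$ vertices; every other pair of vertices leaves the graph connected, giving $t=1$ and only the trivial squeeze. So ``exhibit a different separating pair producing a bona fide non-trivial squeeze'' is impossible there: under a literal reading of the $t\ge 3$ isolated-vertices rule, every squeeze of $K_{2,m}$ is trivial, the graph is non-squeezable, and the proposition is falsified outright. The only escape is interpretive rather than combinatorial: the special case must be read as covering star-like configurations in which the isolated components attach to only one of $u,v$ (so replacing one by an edge $uv$ is not a legitimate reverse edge-expansion), and \emph{not} configurations like $K_{2,m}$ where each isolated vertex is adjacent to both $u$ and $v$ and squeezing one of them yields $K_{2,m-1}$ together with the edge $uv$ --- a genuine non-trivial squeeze. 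That is the reading the paper's Lemma 3.1 implicitly adopts, since it squeezes any component reaching both $u$ and $v$ with no exception for isolated vertices. Until you commit to that reading, the ``delicate part'' you defer is not a postponed verification but a claim that is false as you posed it, and your $k=2$ case (together with the analogous all-isolated, $t\ge 3$ situations arising in the tree case) remains open.
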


\begin{proof}
The proof is based on the following lemma.
\end{proof}

\noindent \texttt{LEMMA 3.1}. \textit{If $G = (V, E)$ is a non-squeezable graph, then $\forall$ separating pairs ($u, v$) in $G$,  $\forall$ $x$ in $V \setminus\{u, v\}$, $u$ separates $x$ and $v$. Or, $\forall$ separating pairs ($u, v$) in $G$, $\forall$ x in $V \setminus \{u, v\}$, v separates x and u.}

\begin{proof}
The proof is by contradiction. Let $G$ be a non-squeezable graph. Let us assume that for all the vertices except $x$, $u$ separates $x$ and $v$. Now $v$ separates $x$ and $u$. Or $x$ is directly connected to $v$. This means that we can squeeze the sub-graph $u-v-x$ to $u-x$. This contradicts our assumption that $G$ is a non-squeezable graph. Thus, the theorem is true. Proposition 2 follows from this theorem.
\end{proof}

\begin{prop}
 Any graph G can be squeezed into a single edge by successively squeezing all the non-squeezable sub-graphs in G.
\end{prop}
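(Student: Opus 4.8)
The plan is to prove this by induction on the number of vertices $|V(G)|$ (taking $G$ connected, as squeezing is only defined on connected graphs), resting on two ingredients: that a single squeeze strictly shrinks the graph, and that a non-squeezable sub-graph is always available to squeeze next. First I would record the elementary observation that any squeeze is size-reducing: squeezing a component $g_i$ with respect to a separating pair $(u,v)$ deletes all $|V(g_i)| \geq 1$ vertices of $g_i$ and puts back only the single edge $uv$, so the vertex count drops by at least one. Since $G$ is finite, every chain of squeezes terminates. I would also note that connectivity is preserved, because $u$ and $v$ remain joined by the new edge while the rest of $G$ is untouched; hence each intermediate graph is again a legitimate input for the squeeze operation.

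The heart of the argument is an existence lemma: \emph{every connected graph $G$ with $|V(G)| \geq 3$ contains a non-squeezable sub-graph that a trivial squeeze collapses to a single edge.} To prove it I would start from any separating pair $(u,v)$ and any component $g_i$ of $G \setminus \{u,v\}$, and examine the sub-graph $H$ induced on $\{u,v\} \cup V(g_i)$ with $(u,v)$ as its distinguished pair. If $H$ is non-squeezable, then by Propositions 1 and 2 its only admissible squeeze is the trivial one that replaces it by the edge $uv$, and we are done. Otherwise $H$ admits a non-trivial squeeze, which exhibits a strictly smaller sub-graph $H'$ of exactly the same form (a distinguished pair together with one of its components). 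Iterating this descent against the strictly decreasing vertex count, the process must halt at a non-squeezable sub-graph. Lemma~3.1 can be invoked here to control the separating-pair structure encountered along the descent.

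With the lemma in hand the induction is immediate. The base case $|V(G)| = 2$ is a single edge, and there is nothing to do. For $|V(G)| \geq 3$, the lemma supplies a non-squeezable sub-graph; performing its trivial squeeze yields a connected graph $G'$ with strictly fewer vertices, to which the induction hypothesis applies. Composing that single trivial squeeze with the sequence that reduces $G'$ to an edge gives the required reduction of $G$ to a single edge, completing the induction.

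The step I expect to be the main obstacle is the existence lemma, and specifically verifying that the minimal sub-graph reached by the descent really is squeezable to a single edge rather than some configuration that stalls the process. Extra care is needed around the declared degenerate cases: triangles and stars are called non-squeezable yet must still admit the trivial collapse to $uv$, so I would treat them as explicit base cases of the descent. I would also attend to the bookkeeping when a squeeze introduces an edge $uv$ that already exists (as in panel~(b) of Figure~\ref{fig:fig2}); this is harmless, since the squeeze definition permits introducing the new edge regardless, but the argument should acknowledge it so that the resulting object is unambiguously the intended graph.
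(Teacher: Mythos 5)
A preliminary remark: the paper gives no proof of this proposition at all --- it simply asserts that it is ``trivial to prove'' --- so your write-up is doing strictly more work than the source, and your overall scaffold (well-founded induction on $|V(G)|$, driven by the observations that every squeeze strictly reduces the vertex count and preserves connectivity) is the right way to make the claim precise.

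The genuine gap is inside your existence lemma, in the descent step. Two things go wrong as written. First, the intermediate object $H'$ --- the sub-graph induced on a separating pair $(u',v')$ of $H$ together with one component of $H\setminus\{u',v'\}$ --- need not be connected: on a path $v_1v_2\cdots v_n$ with $H$ the whole path, taking $(u',v')=(v_2,v_4)$ and the component $\{v_5,\dots,v_n\}$ yields an induced sub-graph in which $v_2$ is isolated, so it is not ``of exactly the same form'' and the descent cannot continue on it. Second, and more seriously, the squeeze operation is defined relative to the whole current graph: to legally squeeze the terminal non-squeezable sub-graph $H^*$ you need $H^*\setminus\{u^*,v^*\}$ to be a \emph{full connected component of $G\setminus\{u^*,v^*\}$}, whereas your descent only guarantees it is a component of $H\setminus\{u^*,v^*\}$. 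If the component chosen at some stage contains one of the outer distinguished vertices $u$ or $v$, then inside $G$ it picks up everything else hanging off that vertex, and the ``squeeze'' you want to perform at the end is not a squeeze of $G$ at all. A repair is to make an extremal choice rather than an arbitrary descent: among all separating pairs $(u,v)$ of the current graph and all components $g_i$ of $G\setminus\{u,v\}$, pick one minimizing $|V(g_i)|$, and argue that any non-trivial squeeze of $\{u,v\}\cup g_i$ would exhibit a component disjoint from $\{u,v\}$ that is also a component of $G$ minus the new pair, contradicting minimality; alternatively, follow the decomposition the paper actually implements in its learning algorithm (cut vertices yield stars; biconnected components are recursively split at split pairs into triangles and triconnected pieces). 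With either repair your induction goes through. A small additional nit: the assertion that the trivial squeeze is the only admissible one for a non-squeezable graph is the definition of non-squeezability, not a consequence of Propositions 1 and 2.
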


This proposition is trivial to prove. Thus non-squeezable graphs can be considered as the atomic blocks from which the graphs are constructed. %Viewing a graph as made up of non-squeezable components will make the learning problem easy.
\begin{defn}
 \textit{Squeeze Minor} is a non-squeezable sub-graph that we squeeze during the squeezing operation.
\end{defn}

\begin{prop}
 The multi-set of squeeze minors that are obtained by successive squeezing of non-squeezable graphs in a graph is unique.
\end{prop}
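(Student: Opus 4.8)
The plan is to treat successive squeezing as a terminating abstract rewriting system on graphs and then establish a confluence property strong enough to force the multiset of extracted minors to be an invariant. Concretely, I would write $G \to_m G'$ whenever $G'$ is obtained from $G$ by squeezing a single non-squeezable subgraph $m$ (a squeeze minor), labelling the step by $m$. Every squeeze strictly reduces the number of edges and vertices, so there is no infinite reduction sequence; together with Proposition 3 this guarantees that every maximal reduction from $G$ terminates at the single edge. What remains is to show that the multiset $\{m_1, m_2, \dots\}$ of labels accumulated along a maximal reduction does not depend on the order in which the squeezes are applied. By Newman's lemma it suffices to prove a local, label-aware diamond: whenever $G \to_{m_1} G_1$ and $G \to_{m_2} G_2$ with $m_1 \neq m_2$, there is a common graph $H$ with $G_1 \to^* H$ and $G_2 \to^* H$ such that the two paths $G \to_{m_1} G_1 \to^* H$ and $G \to_{m_2} G_2 \to^* H$ carry the same multiset of labels.

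The heart of the argument, and the step I expect to be the main obstacle, is a non-crossing lemma describing how two simultaneously squeezable minors of the same graph can intersect. Each squeeze minor $m_i$ is attached to the rest of $G$ only through its separating pair $(u_i, v_i)$: its remaining vertices form a connected component $C_i$ of $G \setminus \{u_i, v_i\}$. First I would rule out nesting: if the internal part of one minor were properly contained in the other, say $C_2 \cup \{u_2,v_2\}$ sitting strictly inside $m_1$, then this would exhibit a non-trivial squeeze of $m_1$, contradicting that $m_1$ is non-squeezable. Next I would rule out genuine crossing by a connectivity argument on the components $C_1, C_2$: since $C_i$ is a connected component of $G$ minus its separating pair, any path leaving $C_1$ must pass through $u_1$ or $v_1$, and analysing how $C_2$ meets $\{u_1,v_1\}$ shows that either $C_2$ is swallowed into $m_1$ (the excluded nested case) or $C_2$ is internally disjoint from $C_1$. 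Hence two distinct squeezable minors are always internally vertex-disjoint, meeting at most in boundary vertices.

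Granting this lemma, the local diamond follows with reductions of length one. Because $m_1$ and $m_2$ share no internal vertices, squeezing $m_1$ leaves $m_2$ intact as a non-squeezable subgraph of $G_1$, and symmetrically; performing the complementary squeeze from each side yields the same graph $H$, and both two-step paths carry exactly the multiset $\{m_1, m_2\}$. This is precisely the label-aware diamond required above. Newman's lemma then upgrades local to global confluence, so $G$ has a unique normal form (the single edge, by Proposition 3); the standard tiling of an arbitrary pair of maximal reductions by these diamonds, combined with the induction on reduction length afforded by termination, shows that all maximal reduction sequences from $G$ produce identical label multisets. This common multiset is the claimed invariant, which establishes that the multiset of squeeze minors is unique.
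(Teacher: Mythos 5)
Your overall strategy (termination plus a label-preserving local confluence, closed off by Newman's lemma) is reasonable, and you correctly flag the non-crossing lemma as the crux. Unfortunately that lemma, and the one-step diamond you derive from it, are false. Take $G$ to be the $4$-cycle $1$--$2$--$3$--$4$--$1$. The path $m_1 = 1$--$2$--$3$ is a valid squeeze minor at the separating pair $(1,3)$ with interior $\{2\}$, and $m_2 = 2$--$3$--$4$ is a valid squeeze minor at $(2,4)$ with interior $\{3\}$. Their interiors are disjoint, but vertex $2$ is interior to $m_1$ while being an attachment (boundary) vertex of $m_2$, and symmetrically for vertex $3$; so the minors do not meet ``at most in boundary vertices.'' Squeezing $m_1$ deletes vertex $2$, so $m_2$ is not ``left intact as a non-squeezable subgraph of $G_1$'' --- it has ceased to exist, and the two-step completions cannot both carry the label multiset $\{m_1,m_2\}$. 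The same interleaving occurs in every cycle of length at least $4$, so this is not a degenerate corner case. A second omission in the diamond: when two minors hang off the same separating pair $(u,v)$, squeezing one introduces a new (virtual) edge $uv$, which can create parallel structure that your argument never confronts (the paper's special case for $t\ge 3$ isolated components exists precisely because of this).

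Note also that the invariant can only be the multiset of isomorphism types of the minors, not of the minors as labelled subgraphs: in the $4$-cycle the two orders record, in their second steps, the concrete paths $1$--$3$--$4$ versus $1$--$2$--$4$, distinct subgraphs of the same type. So even a repaired confluence argument must work modulo isomorphism and must permit completions longer than one step (something like decreasing diagrams rather than a plain diamond). The paper sidesteps commutation entirely and argues structurally: the stars are identified with the cut vertices of the block--cut-vertex decomposition, and the triangles and triconnected pieces with the classically unique decomposition of each biconnected component into triconnected components with virtual edges, so uniqueness is imported from those known canonical decompositions. If you want to keep the rewriting viewpoint, you would need to prove a substantially weaker and more careful local commutation statement that tolerates interleaved minors and virtual edges; as written, the key step fails.
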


\begin{proof}
We give a sketch of the proof. The multi-set of squeeze minors can contain only stars, triangles and triconnected components. Note that the stars in this multi set will be unique since it corresponds to the cut vertices of the graph which are unique. Now, we need to prove that the triangles and the triconnected components in the multi set are unique. If we disconnect the graph at the cut vertices, we will get biconnected components. Consider an arbitrary sequence of squeezes which results in the squeezing of a biconnected component into one of the edges in a star. Consider some squeeze in that sequence. That squeeze is possible only since the component being squeezed is triconnected. The component was either tri-connected to begin with or became triconnected by the previous squeeze operations which introduced virtual edges. Recursively, these triconnected components are also unique. Same argument holds for triangles also.
\end{proof}

If $G_\mathcal{G}$ is the set of all graphs that can be generated using the RPERG $\mathcal{G}$, then from proposition 4, we can say that, for any $g \in G_\mathcal{G}$, $g$ cannot be generated by applying different sets of rules. 

\subsection{Learning the Grammar}

Now we will define a restricted version of PERG, called RPERG.
\begin{defn}
 A Restricted Probabilistic Edge Replacement grammar (RPERG) is a PERG such that for every rule $A\rightarrow R \in $ RPERG, R is a non-squeezable graph.
\end{defn}

RPERGs can be viewed as analogous to PCFGs in the string grammar literature, while PERGs are analogous to Tree Substitution Grammars (TSG). This is more intuitive in the sense that in PCFG, RHS of the rules can contain only a 2-level tree, while TSGs can contain any sub-tree as RHS. Similarly, RPERGs can contain only non-squeezable graph fragments in RHS, while PERGs can contain any graph fragment in RHS.

 In this section, we will see an algorithm to learn RPERG from a set of graphs. Let $D = (g_1, g_2, ...g_n)$ be a set of graphs. We assume that the graphs are undirected and the edges are of the same type and are un-weighted. Given this data, we need to learn the RPERG that could have generated this data. We also assume that all the edges in the data are non-terminal edges. So, all the rules will have only non-terminal edges.

Consider a graph $G$. Let $c(A\rightarrow R)$ be the count of the occurrence of the non-squeezable sub-graph $R$ in $G$. Now, the probability of this graph $G$ under an RPERG is given as,
\[p(G) = \prod_{A\rightarrow R \in P} p(A \rightarrow R)^{c(A \rightarrow R)}\]
For a model built on a set of graphs D, the maximum likelihood estimation of the parameters of the model is given by,
\[p_{ML}^{A \rightarrow R} = \frac{c_D(A \rightarrow R)}{\sum_{R':A\rightarrow R'} c_D(A \rightarrow R') }\]\\
where $c_D(A\rightarrow R)$ is the count of the occurrences of the sub-graph $R$ in the data $D$.

Now the learning problem has been reduced to getting the counts of non-squeezable components in a graph. Let us first consider a simple approach to count the non-squeezable components from a graph. We can first find a non-squeezable sub-graph, squeeze it and repeat the same until we squeeze the entire graph into an edge. But, finding a non-squeezable subgraph by repeated squeezing is computationally expensive. We will propose a more efficient algorithm to count all non-squeezable sub-graphs based on Proposition 1 and the intuitions given in Propositions 3 and 4.

The learning algorithm is given in Algorithm \ref{alg1} . In the algorithm, star(n) denotes a star network with n+1 nodes. The statement $C(A\rightarrow R)+ = 1$ increments the count of the rule $A\rightarrow R$, if it is already present in the set of learnt rules or it will add the rule to the rule set and set count to 1. We assume that the Stack data structure and the grammar rule set are shared between \textsc{Main} and \textsc{Get\_Components} functions.

%\subsubsection{Split pair detection}

Finding split pairs in a biconnected component is the most non-trivial step in the learning algorithm. Any naive implementation of this module would take $O(n^3)$ time. A linear time algorithm, which is linear in the size of the graph is provided by \cite{14}. We have used a publicly available implementation of this algorithm (\url{https://github.com/adrianN/Triconnectivity}) to find split pairs.

Note that the algorithm is inherently parallelizable. Once we split the graph, we can parallelly learn rules from individual sub-graphs. This will make the algorithm even faster. %Also, we stop finding split pairs when we find the first split pair. This can be improved further by not stopping and continue finding a pair which splits the graphs into reasonable two halves.

\begin{algorithm}[h!]
    \small
    \caption{Learn RPERG}\label{alg1}
    
     \textbf{Input:} Set of Graphs D = $\{g_1,g_2,...,g_n\}$. \newline
     \textbf{Output:} RPERG
    \begin{algorithmic}[1]
    \Function{Main}{Set of Graphs D}
    \State $\text{Stack} \gets $ empty stack
    \For{\textit{each graph $g_i$}}
        \State \textsc{Get\_Components}($g_i$)
        \While{\textit{Stack is not empty}}
           \State $\textit{g} \gets \text{Stack.pop()}$
           \State find a split pair (a,b) in \textit{g}
            \If{ $\exists$ \textit{no split pair}}
               \State $C(A\rightarrow g) += 1$
           \Else
             \State $g_1,g_2\gets $Obtained by splitting \textit{g} at (a,b)
               \If{\textit{edge(a,b)} $\notin g$}
                   \State Add edge(a,b) to $g_2$ 
               \EndIf
               \For{\text{\textit{g}' in $g_1,g_2$}}
                   \State \textsc{Get\_Components}(\textit{g'})
               \EndFor
            \EndIf
        \EndWhile
    \EndFor
   \EndFunction
   \end{algorithmic}
   \hfill
   \hfill
   \begin{algorithmic}[1]
   \Function{Get\_Components}{Graph \textit{g}}
   \State $\text{CV} \gets $ cut vertices in \textit{g}
        \For{\textit{each $v_k$ in CV}}
            \State \textit{n} $\gets$ no. of biconn. components connected by $v_k$
            \State \textit{C}(\textit{A $\rightarrow$ star}(\textit{n})) += 1
        \EndFor
        \State \textit{S} $\gets$ set of all biconnected components in \textit{g}
        \For{\textit{each $s_i$ in S}}
            \State \textit{Stack.push}($s_i$)
        \EndFor
   \EndFunction

    \end{algorithmic}
   
    \end{algorithm}

%\subsubsection{Scalability}

%The learning algorithm for learning RPERGs from a set of graphs is scalable in nature. Note that the only bottleneck in the algorithm is to find a split pair in the graph, i.e., a pair of vertices (need not be adjacent), whose removal will disconnect the graph. Thus, we use the linear time algorithm provided by \cite{14}. Also note that the algorithm is inherently parallelizable. Once we split the graph, we can parallelly learn rules from individual sub-graphs. This will make the algorithm even faster. Also we stop finding split pairs when we find the first split pair. This can be improved further by not stopping and continue finding a pair which splits the graphs into reasonable two halves.

\subsection{The Generative Model}

%We propose a generative model for graph data based on RPERGs. The goal is to learn a generative model from the given graph or network data.  
%Our model has the following assumptions.
%\begin{itemize}[noitemsep]
%    \item The network is homogeneous i.e., the links are of same type.
%    \item The links are un-weighted.
%\end{itemize}

%The first assumption can be relaxed by extending this approach to the case when there are $k$ types of links, by having $k$ labels for edges in the ERG.
In the previous section, we have seen an algorithm for learning RPERGs. Given a set of graphs, the learning algorithm will learn an RPERG from the graphs. In this section, we propose two different generative models for graphs based on the learnt RPERG.

Since we have assumed that the graphs have only one type of link, the rules contain only one non-terminal label, namely $A$. We consider the absence of a label for an edge as a terminal; in other words, $\epsilon$ is the only terminal. The learning algorithm will consider all the edges in the given graph to be non-terminal edges. So, in the learnt grammar, all the edges in the RHS of the rules will be non-terminal edges.

\subsubsection{Proposed Model - \textit{ERGM-1}}

The first model is based on grammar derivation. If we start with an edge labelled with A and apply the rules from the learnt grammar successively, the derivation will not terminate since none of the learnt rules have terminal edges in the RHS. So we append the learnt model with an additional rule which converts a labelled edge to an un-labelled edge. We assign probability $p$ to this rule and re-normalize the probability of other rules accordingly. We call this grammar, modified RPERG. Now, the generation of a new graph is nothing but the successive application of the rules until we get all terminal edges. The model is described in Algorithm \ref{alg3}. %By modified RPERG, we mean the learnt RPERG with an additional rule for converting labelled edges to unlabelled edges and re-normalization as described above.

This model gives the notion of likelihood of a graph being a member of a particular class of graphs. Given a set of graphs belonging to a class, we can learn a grammar for the class and parse the given graph with that grammar. The probability of the graph gives us some idea about the membership of the graph to this class. Although we cannot exactly control the size of the graph with this model, the coarse size of the graph can be approximately controlled by the parameter $p$. %This model also gives a formal notion of graph generation. However, we cannot control the size of the graph with this model. Every derivation will result in graphs of different size. However, the coarse size of the graph can be approximately controlled by the parameter $p$.

\begin{algorithm}[h!]
\small
\caption{\textit{ERGM-1}}\label{alg3}

\textbf{Input:} Modified RPERG \newline
\textbf{Output:} A Graph

\begin{algorithmic}[1]
    \State Graph G = NULL
    \State Add a non-terminal edge to G
    \While{\textit{$\exists$ a non-terminal edge in G}}
        \State Randomly pick a non-terminal edge A in G.
        \State Sample a rule $A\rightarrow R$ from RPERG and replace A with R in G.
    \EndWhile
    \State return G
\end{algorithmic}

\end{algorithm}

\subsubsection{Proposed Model - \textit{ERGM-2}}

 The second model uses the learnt RPERG directly without doing any additions. We start with an edge and randomly choose an edge and replace it with a sub-graph based on rule sampled from the distribution of rules, until we get a graph of required size. By size, here we mean the number of nodes. Then we stop expanding and convert all non-terminal edges to terminal edges. The model is described in Algorithm \ref{alg4}. This model can be used to generate a graph of required size, with desired properties that are learnt from a set of graphs.
\begin{algorithm}[h!]
\small
\caption{\textit{ERGM-2}}\label{alg4}

\textbf{Input:} RPERG \newline
\textbf{Output:} A Graph

\begin{algorithmic}[1]
	
    \State Graph G = NULL
    \State Add a non-terminal edge to G
    \While{\textit{desired graph size is not reached}}
        \State Randomly pick a non-terminal edge A in G.
        \State Sample a rule $A\rightarrow R$ from RPERG and replace A with R in G.
    \EndWhile
    \State Convert all non-terminal edges to terminal edges.
    \State return G
\end{algorithmic}

\end{algorithm}

\section{Experiments}
Here, we show that RPERGs contain rules that capture the structure of the graph. We test our proposed model by fitting it onto several real life graphs. First, we learn the grammar from the graph. Then, we generate graphs from the learnt grammar using the generative model. In our experiments, we use $ERGM$-$2$ to generate the graphs. In this section, we compare our approach against existing state-of-the-art graph generators.

\subsection{Real World Datasets}
The datasets considered in this paper are the same as those used in \cite{aguinaga2016growing}. The networks vary not only in the number of vertices and edges, but also in the clustering coefficient, diameter, degree distribution and many other graph properties. Table \ref{tab:tab1} gives the statistics for these networks. The Arxiv GR-QC covers scientific collaborations in the General Relativity and Quantum Cosmology section of Arxiv; the Internet Routers is network of autonomous systems of the internet connected with each other; Enron Emails is email correspondence graph of Enron corporation; DBLP is co-authorship graph from DBLP dataset. The graphs were obtained from \href{http://snap.stanford.edu/data}{SNAP}  and \href{http://konect.uni-koblenz.de}{KONECT} repositories.

\begin{table}[h!]
    \small
    \centering
    
    \begin{tabular}{|c|c|c|c|c|}
    \hline
        Dataset & Nodes & Edges & Diameter & Clust. Coeff.\\
        \hline
        Arxiv & 5242 & 14496 &17&0.529 \\
        \hline 
        Routers&6474 & 13895 &9 & 0.252\\
        \hline
        Enron & 36692 & 183831 &11 &0.497 \\
        \hline
        DBLP & 317080 & 1049866 &21 &0.632 \\
        \hline
    \end{tabular}
    \caption{\small Dataset Statistics for real world graphs.}
    \label{tab:tab1}
\end{table}

\subsection{Comparison with existing models}
We compare several properties of graphs from four different graph generators (RPERG, HRG\cite{aguinaga2016growing}, Chung-Lu\cite{chung2002connected}, Kronecker\cite{leskovec2010kronecker}) with the original graph G. The HRG based approach has already been introduced in Section \ref{sec:hrg}. The Chung-Lu model takes a degree distribution as input and generates a new graph with similar degree distribution and size. Kronecker model first learns an initiator matrix and then performs a recursive multiplication of that initiator matrix to create an adjacency matrix of the approximate graph. We use KronFit\cite{leskovec2010kronecker} to learn the $2\times2$ initiator matrix. 

In this section, we generate 10 graphs each for the graph generators and plot the mean values for different properties. Figures \ref{fig:fig4},\ref{fig:fig5},\ref{fig:fig6} contain plots of graph properties (Degree Distribution, Network Values, Hop Plot, Mean Clustering Coefficient, Scree Plot, Node Triangle Participation) for Arxiv, Routers and Enron datasets respectively. %We now discuss about each of the properites \\
\begin{figure}[h!]
	\centering
	\captionsetup{skip=-6pt}
	\begin{subfigure}[b]{.235\textwidth}
		\centering
		\includegraphics[scale=0.285]{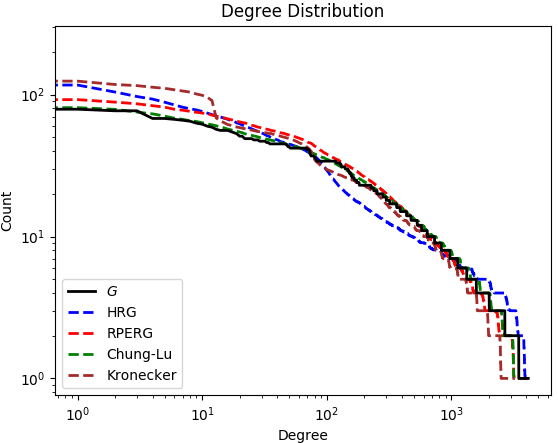}
	    \label{fig1a}
	\end{subfigure}
	\begin{subfigure}[b]{.215\textwidth}
		\centering
		\includegraphics[scale=0.285]{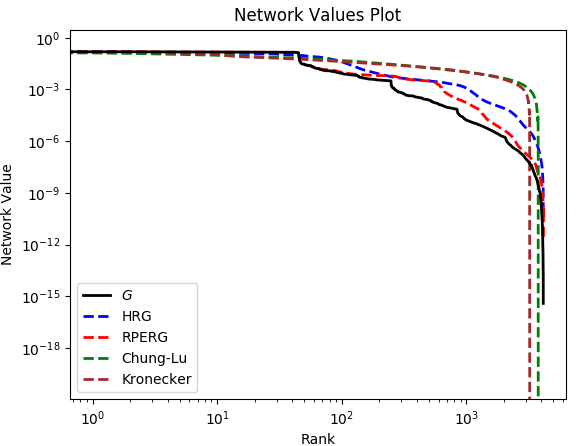}
	    \label{fig1b}
	\end{subfigure}
	\begin{subfigure}[b]{.235\textwidth}
		\centering
		\includegraphics[scale=0.285]{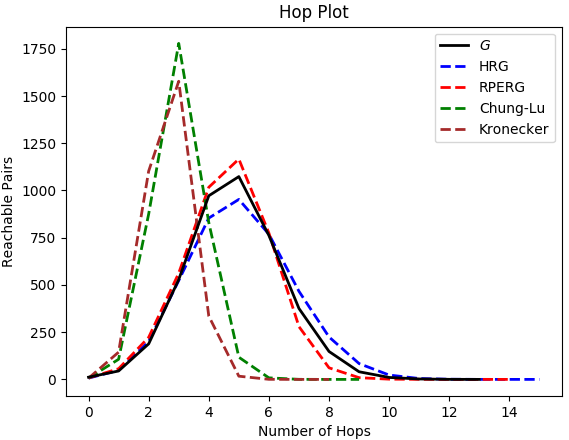}
	    \label{fig1c}
	\end{subfigure}
	\begin{subfigure}[b]{.215\textwidth}
		\centering
		\includegraphics[scale=0.285]{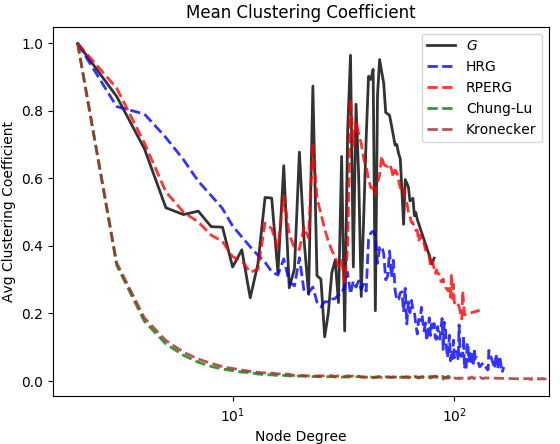}
	    \label{fig1d}
	\end{subfigure}
	\begin{subfigure}[b]{.235\textwidth}
		\centering
		\includegraphics[scale=0.285]{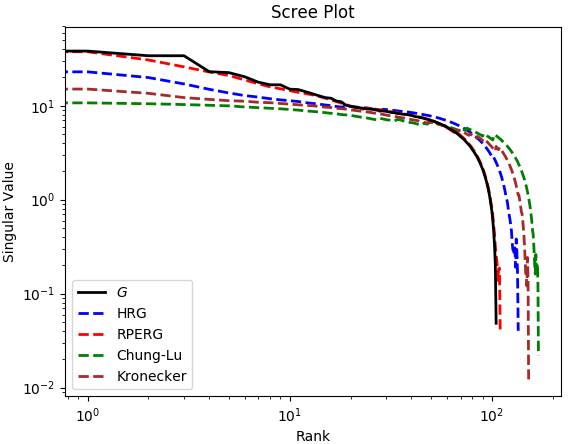}
	    \label{fig1d}
	\end{subfigure}
	\begin{subfigure}[b]{.215\textwidth}
		\centering
		\includegraphics[scale=0.285]{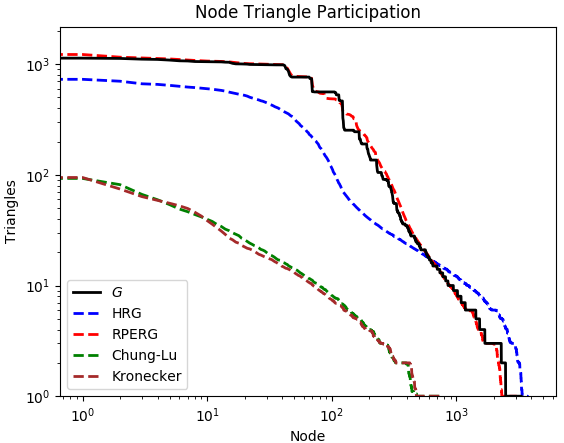}
	    \label{fig1d}
	\end{subfigure}

	\caption{\small Plots for Arxiv GR-QC dataset.}
    \label{fig:fig4}
\end{figure}

\noindent\textbf{Degree Distribution:} It is the distribution of number of edges connecting to each vertex in the graph. From the plots, we can see that each of the generators give graphs that are slightly different from original graph, but all of the them capture the power law degree distribution. \\
\textbf{Network Values:} This is a plot of the eigen components of the eigen vector corresponding to the largest eigen value as a function of their rank. From the plots, it can be seen that RPERG performs consistently well across all graphs but the difference between generators is difficult to discern. To more concretely compare the eigenvectors, the cosine distance between the eigenvector centrality of the original graph and the model's generated graphs is shown in Table \ref{tab:tab2}. It can be seen that the distance values are lowest for RPERG.
\begin{table}[ht!]
    \small 
    \centering
    \setlength{\belowcaptionskip}{-4pt}
    \begin{tabular}{|c|c|c|c|c|}
    \hline
    Dataset & RPERG & HRG & Chung-Lu & Kronecker\\
    \hline
    Arxiv  & \textbf{0.0025} & 0.0161 & 0.3496& 0.3406\\
    \hline
     Routers &\textbf{0.0247} & 0.0411 & 0.0379  & 0.0614\\
    \hline
    Enron  & \textbf{0.00007} & 0.0002 & 0.0052 &0.0676\\
    \hline
    DBLP & \textbf{0.0079} & 0.0649 & 0.5854 & 0.4997\\
    \hline 
    
    \end{tabular}
    \caption{\small Cosine Distance between the eigenvector centrality of original graph and graphs from generator.}
    
    \label{tab:tab2}
\end{table}

\noindent\textbf{Hop Plot: }Hop plot shows the number of vertex-pairs that are reachable within $x$ hops. It gives a sense about the distribution of the shortest path lengths in the network and about how quickly nodes' neighborhoods expand with the number of hops. Similar to \cite{leskovec2007graph,aguinaga2016growing}, we generate hop plot by choosing 50 random nodes and performing a complete breadth-first traversal over each graph. From the plots, we can see that hop-plots of RPERG are consistently similar to the original graph.\\
\textbf{Mean Clustering Coefficient: }Clustering coefficient is one particular measure of community structure that has been widely used in literature\cite{prat2014community,kolda2014scalable}. We plot the average clustering coefficient of the nodes as a function of its degree in the graph. From the plots, it can be seen that RPERG matches the community structure of the original graph. Similar to \cite{seshadhri2012community}, we see that Chung-Lu and Kronecker models perform poorly in this task.\\
\textbf{Scree plot: }This is a plot of the eigen values of the graph adjacency matrix as a function of their rank, which has been found to obey power law\cite{chakrabarti2004r}. From the plots, we can see that RPERG is closest to original graph in terms of eigen distributions.

\noindent\textbf{Node Triangle Participation: }This is a plot of the number of triangles versus the number of nodes that participate in that triangles. It is a measure of transitivity in networks\cite{leskovec2010kronecker} since edges in real-world networks tend to cluster\cite{9} and form triads of connected nodes. From the plots, it can be seen that RPERG consistently captures the node triangle participation of the original graph. \\
\textbf{Graphlet Correlation Distance (GCD): }\cite{yaverouglu2015proper} has identified a new metric called GCD. It computes the distance between two graphlet correlation matrices. GCD measures the frequency of the various graphlets present in each graph, i.e the number of edges, wedges, triangles, squares, 4-cliques, etc., and compares the graphlet frequencies between two graphs. Because GCD is a distance metric, lower values are better. Table \ref{tab:tab3} compares the GCD of original graph with the graphs generated using RPERG, HRG, Chung-Lu and Kronecker. It can be seen that GCD values are lowest for our model.
\begin{table}[h!]
    \small
    \centering
    
    \begin{tabular}{|c|c|c|c|c|}
    \hline
    Dataset & RPERG & HRG & Chung-Lu & Kronecker\\
    \hline
    Arxiv  & \textbf{1.086} &1.094 &1.792 &2.071 \\
    \hline
   Routers &\textbf{1.293} &1.404 &1.975 &2.776 \\
    \hline
    Enron  & \textbf{0.487} & 0.525 &1.319 & 2.83\\
    \hline
    DBLP & \textbf{0.409} & 1.602 & 1.738 & 2.821 \\
    \hline 
    
    \end{tabular}
    \caption{\small Graphlet Correlation Distance values.}
    \label{tab:tab3}
\end{table}

\begin{figure}[ht!]
	\centering
	\captionsetup{aboveskip=-6pt, belowskip=11pt}
	\begin{subfigure}[b]{.235\textwidth}
		\centering
		\includegraphics[scale=0.275]{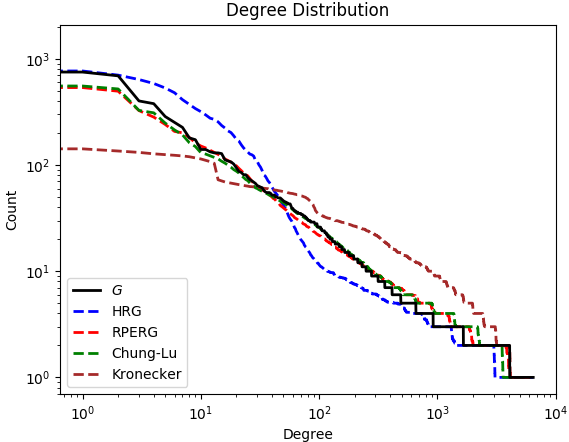}
	    \label{fig1a}
	\end{subfigure}
	\begin{subfigure}[b]{.215\textwidth}
		\centering
		\includegraphics[scale=0.275]{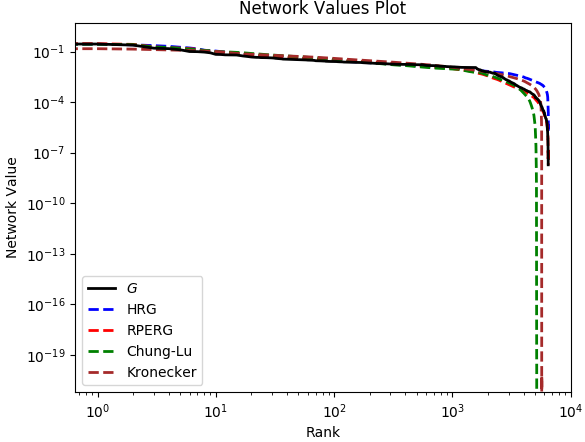}
	    \label{fig1b}
	\end{subfigure}
	\begin{subfigure}[b]{.235\textwidth}
		\centering
		\includegraphics[scale=0.275]{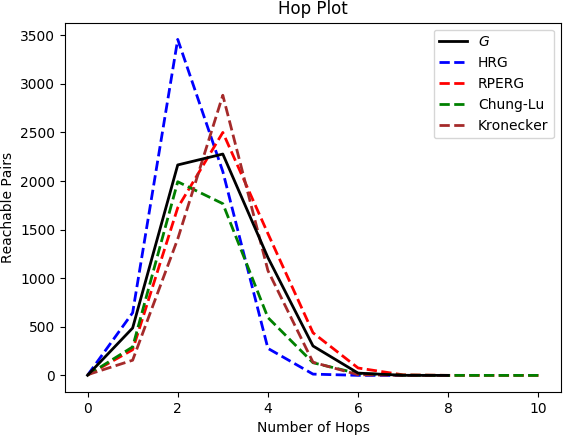}
	    \label{fig1c}
	\end{subfigure}
	\begin{subfigure}[b]{.215\textwidth}
		\centering
		\includegraphics[scale=0.275]{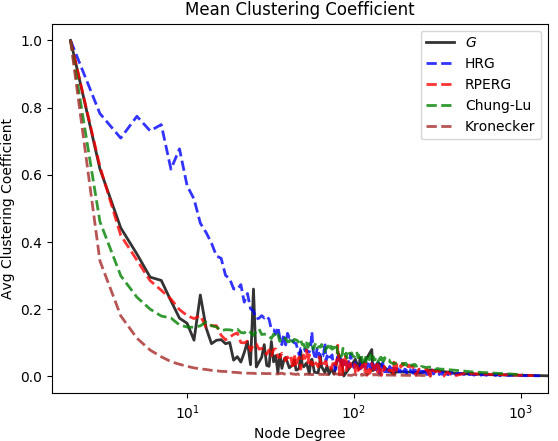}
	    \label{fig1d}
	\end{subfigure}
	\begin{subfigure}[b]{.235\textwidth}
		\centering
		\includegraphics[scale=0.275]{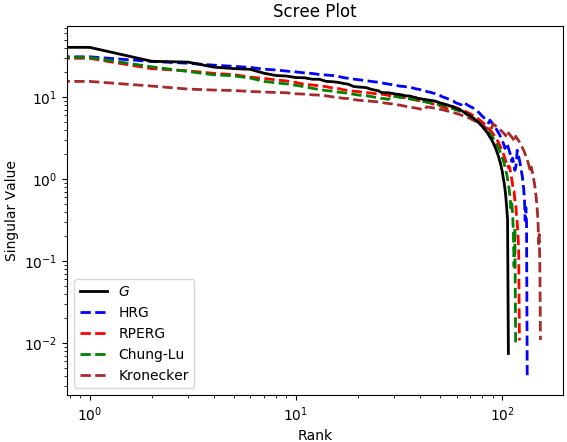}
	    \label{fig1d}
	\end{subfigure}
	\begin{subfigure}[b]{.215\textwidth}
		\centering
		\includegraphics[scale=0.275]{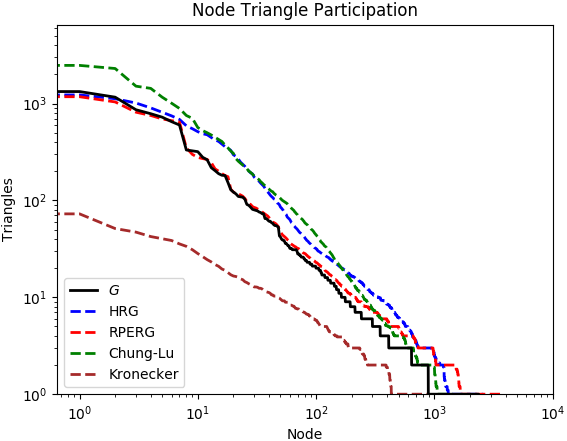}
	    \label{fig1d}
	\end{subfigure}

	\caption{\small Plots for Internet Routers dataset.}
    \label{fig:fig5}
\end{figure}

\begin{figure}[ht!]
	\centering
	\captionsetup{aboveskip=-6pt, belowskip=-25pt}
	\begin{subfigure}[b]{.235\textwidth}
		\centering
		\includegraphics[scale=0.280]{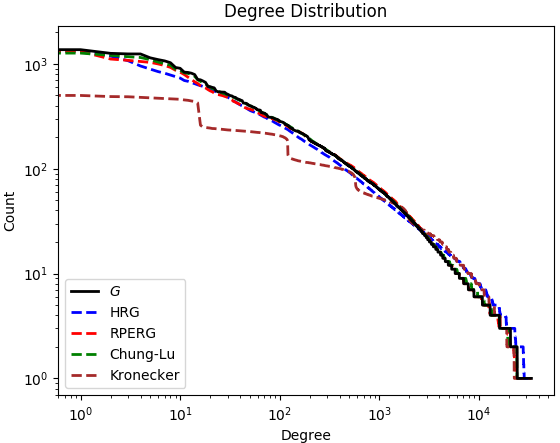}
	    \label{fig1a}
	\end{subfigure}
	\begin{subfigure}[b]{.215\textwidth}
		\centering
		\includegraphics[scale=0.280]{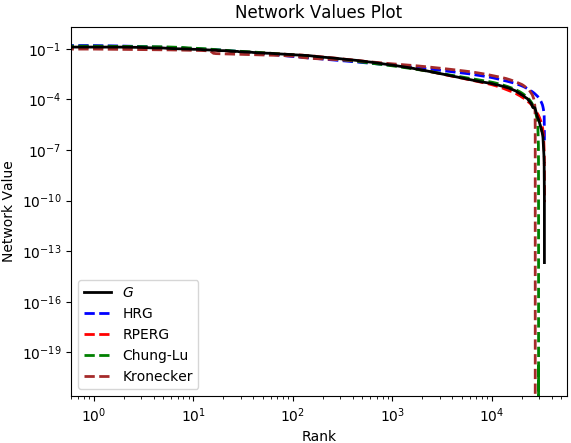}
	    \label{fig1b}
	\end{subfigure}
	\begin{subfigure}[b]{.235\textwidth}
		\centering
		\includegraphics[scale=0.280]{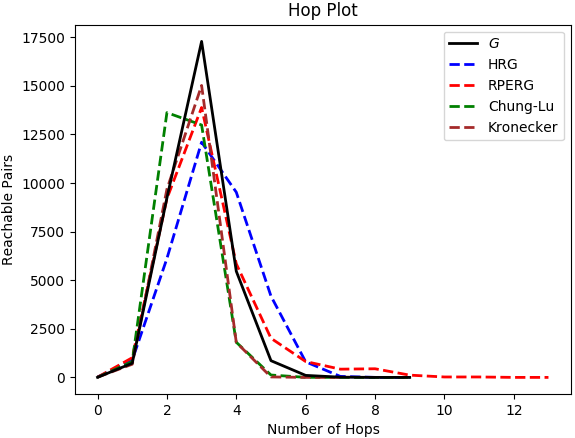}
	    \label{fig1c}
	\end{subfigure}
	\begin{subfigure}[b]{.215\textwidth}
		\centering
		\includegraphics[scale=0.280]{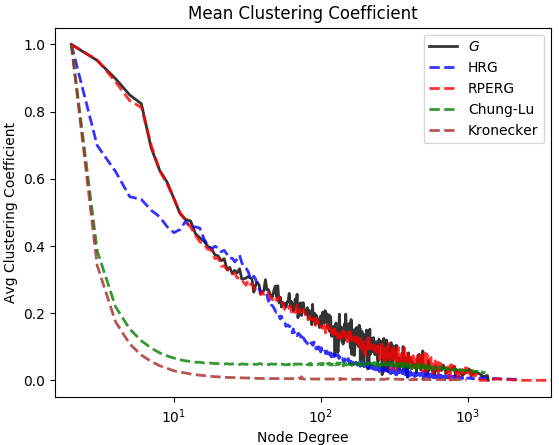}
	    \label{fig1d}
	\end{subfigure}
	\begin{subfigure}[b]{.235\textwidth}
		\centering
		\includegraphics[scale=0.280]{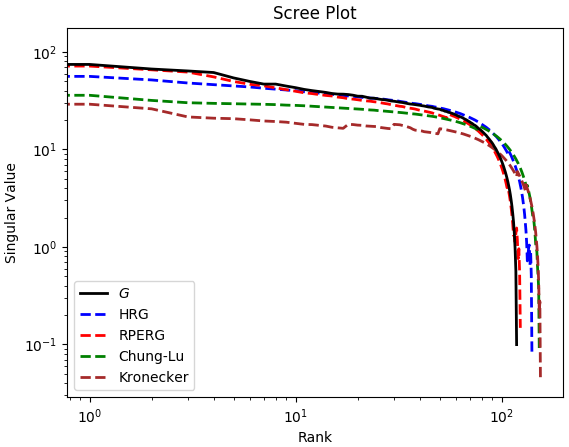}
	    \label{fig1d}
	\end{subfigure}
	\begin{subfigure}[b]{.215\textwidth}
		\centering
		\includegraphics[scale=0.280]{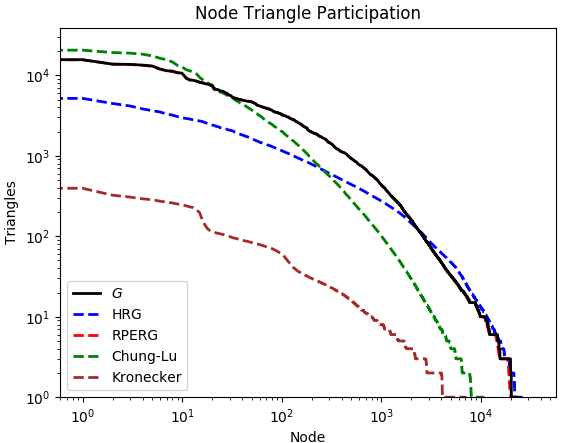}
	    \label{fig1d}
	\end{subfigure}

	\caption{\small Plots for Enron Emails dataset.}
    \label{fig:fig6}
\end{figure}

\subsection{Runtime Analysis}
%Although it is not the main focus of the paper, we briefly describe the time complexities of the graph models. 
The overall runtime of the RPERG model can be split into two parts: (1) Rule extraction, and (2) Graph generation.
%For HRG, rule extraction takes $O(m \cdot \Delta)$ time, which is determined by number of nodes in the clique tree ($m$) and number of vertices in each node of clique tree generated by MCS (which is bounded by maximum degree of G, denoted by $\Delta$). 
Let the given graph G contain n vertices and m edges. Each iteration of the \textit{while} loop in line 5 of Alg.\ref{alg1} requires $O(n+m)$ time for finding a split pair and $O(n+m)$ time for finding all cut-vertices.
The runtime of RPERG learning process depends on the type of split obtained at each iteration. In the worst case, size of the graph reduces by 1 node, after splitting it, at each iteration, and time complexity is $O(m \cdot n)$. Conversely, best case time complexity is $O(m \cdot log n)$.
By comparison, HRG rule extraction takes $O(m \cdot \Delta)$ time, where $\Delta$ is maximum degree of G, Kronecker learns model in $O(m)$, Chung-Lu does not learn a model, but takes the degree sequence as input.

For RPERG and HRG, since graph generation is a straightforward application of the grammar rules, the time complexity is linear in the number of edges of the output graph. For Kronecker, graph generation is in $O(m)$ whereas it takes $O(n+m)$ for Chung-Lu model.

%\clearpage
\section{Conclusion}
We propose a graph generation model based on probabilistic graph grammars. We characterize the notion of non-squeezable graphs and restrict our attention only to edge replacement rules that introduce non-squeezable components. From our experiments, we find that the graphs generated by our model more closely resemble the original graph compared to those obtained by existing graph generators. Even though the grammar is context free, it is able to capture most of the statistical properties of the graph. We observe that our algorithm is easily parallelizable as we can simultaneously run on all the graphs when we have multiple graphs. 

There are several extensions for the model that are possible. We can try to model preferential attachment by using a context sensitive grammar. Tackling graphs with multiple types of links (heterogeneous links) is also a challenging problem. In our algorithm, we stop finding split pairs when we find the first split pair. This can be improved further by not stopping and continue finding a pair which splits the graphs into reasonable two halves.

\bibliographystyle{siam}
\def\bibliographytypesize{\tiny}
\bibliography{references}

\begin{thebibliography}{10}

\bibitem{aguinaga2016growing}
{\sc S.~Agui{\~n}aga, R.~Palacios, D.~Chiang, and T.~Weninger}, {\em Growing
  graphs from hyperedge replacement graph grammars}, in Proceedings of the 25th
  ACM International on Conference on Information and Knowledge Management, ACM,
  2016, pp.~469--478.

\bibitem{12}
{\sc L.~Akoglu and C.~Faloutsos}, {\em Rtg: a recursive realistic graph
  generator using random typing}, in Joint European Conference on Machine
  Learning and Knowledge Discovery in Databases, 2009, pp.~13--28.

\bibitem{arnborg1987complexity}
{\sc S.~Arnborg, D.~G. Corneil, and A.~Proskurowski}, {\em Complexity of
  finding embeddings in ak-tree}, SIAM Journal on Algebraic Discrete Methods, 8
  (1987), pp.~277--284.

\bibitem{2}
{\sc A.-L. Barab{\'a}si and R.~Albert}, {\em Emergence of scaling in random
  networks}, science, 286 (1999), pp.~509--512.

\bibitem{3}
{\sc A.-L. Barab{\'a}si, E.~Ravasz, and T.~Vicsek}, {\em Deterministic
  scale-free networks}, Physica A: Statistical Mechanics and its Applications,
  299 (2001), pp.~559--564.

\bibitem{chakrabarti2004r}
{\sc D.~Chakrabarti, Y.~Zhan, and C.~Faloutsos}, {\em R-mat: A recursive model
  for graph mining}, in Proceedings of SDM 2004, SIAM, 2004, pp.~442--446.

\bibitem{13}
{\sc D.~Chiang, J.~Andreas, D.~Bauer, K.~M. Hermann, B.~Jones, and K.~Knight},
  {\em Parsing graphs with hyperedge replacement grammars}, in Proceedings of
  the 51st Annual Meeting of the Association for Computational Linguistics
  (Volume 1: Long Papers), vol.~1, 2013, pp.~924--932.

\bibitem{chung2002connected}
{\sc F.~Chung and L.~Lu}, {\em Connected components in random graphs with given
  expected degree sequences}, Annals of combinatorics, 6 (2002), pp.~125--145.

\bibitem{1}
{\sc P.~Erds and A.~R{\'e}nyi}, {\em On the evolution of random graphs}, Publ.
  Math. Inst. Hung. Acad. Sci, 5 (1960), pp.~17--61.

\bibitem{10}
{\sc A.~Goldenberg, A.~X. Zheng, S.~E. Fienberg, E.~M. Airoldi, et~al.}, {\em A
  survey of statistical network models}, Foundations and
  Trends{\textregistered} in Machine Learning, 2 (2010), pp.~129--233.

\bibitem{14}
{\sc J.~E. Hopcroft and R.~E. Tarjan}, {\em Finding the triconnected components
  of a graph}, tech. report, CORNELL UNIV ITHACA NY DEPT OF COMPUTER SCIENCE,
  1972.

\bibitem{kolda2014scalable}
{\sc T.~G. Kolda, A.~Pinar, T.~Plantenga, and C.~Seshadhri}, {\em A scalable
  generative graph model with community structure}, SIAM Journal on Scientific
  Computing, 36 (2014), pp.~C424--C452.

\bibitem{Koller}
{\sc D.~Koller and N.~Friedman}, {\em Probabilistic Graphical Models:
  Principles and Techniques - Adaptive Computation and Machine Learning}, The
  MIT Press, 2009.

\bibitem{6}
{\sc R.~Kumar, P.~Raghavan, S.~Rajagopalan, D.~Sivakumar, A.~Tomkins, and
  E.~Upfal}, {\em Stochastic models for the web graph}, in Foundations of
  Computer Science, 2000., IEEE, 2000, pp.~57--65.

\bibitem{11}
{\sc J.~Leskovec, D.~Chakrabarti, J.~Kleinberg, and C.~Faloutsos}, {\em
  Realistic, mathematically tractable graph generation and evolution, using
  kronecker multiplication}, in European Conf. on Prin. of Data Mining and
  Knowledge Discovery, Springer, 2005, pp.~133--145.

\bibitem{leskovec2010kronecker}
{\sc J.~Leskovec, D.~Chakrabarti, J.~Kleinberg, C.~Faloutsos, and
  Z.~Ghahramani}, {\em Kronecker graphs: An approach to modeling networks},
  Journal of Machine Learning Research, 11 (2010), pp.~985--1042.

\bibitem{leskovec2007graph}
{\sc J.~Leskovec, J.~Kleinberg, and C.~Faloutsos}, {\em Graph evolution:
  Densification and shrinking diameters}, ACM TKDD, 1 (2007), p.~2.

\bibitem{7}
{\sc D.~M. Pennock, G.~W. Flake, S.~Lawrence, E.~J. Glover, and C.~L. Giles},
  {\em Winners don't take all: Characterizing the competition for links on the
  web}, National academy of sciences,  (2002), pp.~5207--5211.

\bibitem{prat2014community}
{\sc A.~Prat-P{\'e}rez and D.~Dominguez-Sal}, {\em How community-like is the
  structure of synthetically generated graphs?}, in Workshop on GRAph Data
  management Experiences and Systems, ACM, 2014, pp.~1--9.

\bibitem{seshadhri2012community}
{\sc C.~Seshadhri, T.~G. Kolda, and A.~Pinar}, {\em Community structure and
  scale-free collections of erd{\H{o}}s-r{\'e}nyi graphs}, Physical Review E,
  85 (2012), p.~056109.

\bibitem{tarjan1984simple}
{\sc R.~E. Tarjan and M.~Yannakakis}, {\em Simple linear-time algorithms to
  test chordality of graphs, test acyclicity of hypergraphs, and selectively
  reduce acyclic hypergraphs}, SIAM Journal on computing, 13 (1984),
  pp.~566--579.

\bibitem{9}
{\sc D.~J. Watts and S.~H. Strogatz}, {\em Collective dynamics of
  ‘small-world’networks}, nature, 393 (1998), p.~440.

\bibitem{yaverouglu2015proper}
{\sc {\"O}.~N. Yavero{\u{g}}lu, T.~Milenkovi{\'c}, and N.~Pr{\v{z}}ulj}, {\em
  Proper evaluation of alignment-free network comparison methods},
  Bioinformatics, 31 (2015), pp.~2697--2704.

\end{thebibliography}

\end{document}